\documentclass[11pt]{article}

\usepackage[letterpaper, margin=1in]{geometry}

\usepackage{amsmath,amssymb,amsthm,mathtools,bm}

\usepackage{booktabs}
\usepackage{graphicx}

\usepackage{enumitem}

\usepackage{natbib}

\usepackage[dvipsnames]{xcolor}

\usepackage[colorlinks=true]{hyperref}

\newcommand{\E}{\mathbb{E}}
\newcommand{\Pp}{\mathbb{P}}
\newcommand{\R}{\mathbb{R}}
\newcommand{\G}{\mathcal{G}}
\newcommand{\F}{\mathcal{F}}
\newcommand{\D}{\mathcal{D}}
\newcommand{\A}{\mathcal{A}}
\newcommand{\given}{\mid}

\newcommand{\expit}{\operatorname{expit}}
\newcommand{\argmax}{\operatorname*{arg\,max}}
\newcommand{\ind}{\mathbf{1}}

\theoremstyle{plain}
\newtheorem{theorem}{Theorem}
\newtheorem{proposition}{Proposition}
\newtheorem{corollary}{Corollary}

\theoremstyle{definition}
\newtheorem{definition}{Definition}
\newtheorem{example}{Example}

\theoremstyle{remark}
\newtheorem{remark}{Remark}

\title{Practical Boundary Degeneracy and Reverse-Martingale Limits\\
       in Sequential Binary Models}

\author{Yuan-chin Ivan Chang\\[4pt]
  \small Institute of Statistical Science, Academia Sinica\\
  \small 128 Academia Road, Section 2, Nankang, Taipei 11529, Taiwan\\
  \small \href{mailto:ivan@stat.sinica.edu.tw}{\texttt{ycchang@as.edu.tw}}}

\date{\small\today}

\begin{document}

%% hypersetup here avoids \ProcessKeyvaOptions errors at package-load time
\hypersetup{
  citecolor  = MidnightBlue,
  linkcolor  = MidnightBlue,
  urlcolor   = MidnightBlue,
  pdftitle   = {Practical Boundary Degeneracy and Reverse-Martingale Limits in Sequential Binary Models},
  pdfauthor  = {Yuan-chin Ivan Chang},
  pdfsubject = {Statistics},
  pdfkeywords= {Bernoulli trial, boundary probability, complete separation, confidence sequence, logistic regression, reverse martingale, sequential inference}
}

\maketitle

\begin{abstract}
A run of all failures, a run of all successes, or complete separation in a
logistic regression each tempts the analyst to declare a probability of exactly
zero or one.  The central message of this paper is that all three phenomena
share a common structure: finite sequential data justify practical boundary
statements of the form $p\leq\varepsilon$ or $p\geq1-\varepsilon$, but not
exact boundary probabilities.  The paper's contribution is to unify these three
settings under a single reverse-martingale framework and to derive a stopping
rule, $\tau_{\mathrm{RM}}$, that requires three conditions
simultaneously---boundary closeness $B_n\leq\varepsilon$, uncertainty width
$W_n\leq w$, and trajectory stability $r_n\leq\eta$---rather than boundary
closeness alone.  The reverse-martingale view recasts boundary degeneracy as a
property of the limiting conditional law $M_\infty=\E(Y\given\G_\infty)$
rather than a finite-sample event, complementing classical one-sided binomial
tests and Wald's sequential probability ratio test without replacing them.
Numerical studies across Bernoulli rare-event trials, low- and
high-dimensional logistic regression, controlled risk trajectories, and a real
health-economics data set demonstrate that boundary closeness alone is an
unreliable stopping signal, and that the stability condition separates
transient apparent certainty from genuine limiting degeneracy.
\end{abstract}

\smallskip
\noindent\textbf{Keywords:}
Bernoulli trial; boundary probability; complete separation;
confidence sequence; logistic regression; reverse martingale; sequential inference.

\smallskip
\noindent\textbf{MSC 2020:}
Primary 60G42, 62L12; Secondary 62F03, 62J12.

\smallskip
\noindent\textit{Submitted for publication.  Comments welcome.}

\section{Introduction}

Sequential binary data frequently create the appearance of degeneracy.  A safety monitor observes a long run of non-events; a logistic regression separates outcomes perfectly; a sequential risk score approaches zero before the underlying state has stabilized.  Each tempts the analyst to conclude that a probability is exactly zero or one.  The central statistical point of this paper is:
\[
\begin{aligned}
&\text{finite sequential data may justify practical boundary statements,}\\
&\text{but not exact boundary probabilities.}
\end{aligned}
\]

These three phenomena---Bernoulli all-failure runs, logistic separation, and near-degenerate risk trajectories---are treated as unrelated technical inconveniences in their respective literatures.  The contribution of this paper is to unify them.  All three are manifestations of the same confusion between a finite near-boundary estimate and the limiting conditional law to which the underlying process converges.  Once that identification is made, a common remedy follows: a stopping rule that requires boundary closeness, uncertainty control, \emph{and} trajectory stability simultaneously, rather than boundary closeness alone.

The mathematical vehicle is the classical reverse-martingale convergence theorem \citep{doob1953,durrett2019}.  For a decreasing filtration $\G_1\supseteq\G_2\supseteq\cdots$ and $Y\in\{0,1\}$, the process $M_n=\E(Y\given\G_n)$ converges almost surely to $M_\infty=\E(Y\given\G_\infty)$.  Boundary degeneracy is a property of $M_\infty$, not of any finite $M_n$.  The paper applies this observation to give a unified interpretation of the three settings and to derive the three-condition stopping rule $\tau_{\mathrm{RM}}$.

The classical tools---one-sided binomial tests and confidence bounds \citep{clopper1934}, confidence sequences \citep{howard2021}, and Wald's SPRT \citep{wald1945,wald1947}---each address one part of the picture and are used here as building blocks.  They are not replaced.  In particular, SPRT is the correct tool when the problem is genuinely simple-versus-simple Bernoulli monitoring; it does not cover composite one-sided statements, logistic separation, or trajectory stability, which is where the present framework adds content.  A brief technical comparison is given in Appendix~\ref{app:sprt_benchmark}.

The paper is organized as follows.  Section~\ref{sec:bernoulli} gives the Bernoulli boundary rules.  Section~\ref{sec:logistic} covers logistic separation.  Section~\ref{sec:rm} develops the reverse-martingale formulation and its illustrated example.  Section~\ref{sec:stopping} states the three-condition stopping rule.  Section~\ref{sec:applications} discusses sequential prediction and adaptive treatment.  Section~\ref{sec:numerical} presents numerical evidence, and Section~\ref{sec:discussion} concludes.

\section{Forward sequential boundary statements for Bernoulli trials}
\label{sec:bernoulli}

Let
\[
    Y_1,Y_2,\ldots \overset{\mathrm{iid}}{\sim} \operatorname{Bernoulli}(p),
    \qquad 0\leq p\leq 1,
\]
and define $S_n=\sum_{i=1}^n Y_i$.  The ordinary maximum-likelihood estimator is
\[
    \hat p_n=\frac{S_n}{n}.
\]
If $S_n=0$, then $\hat p_n=0$; if $S_n=n$, then $\hat p_n=1$.  These finite-sample equalities are estimates, not proofs that the data-generating probability is exactly zero or one.

\subsection{All-failure and all-success stopping}

The following elementary calculation gives a useful benchmark for practical degeneracy.

\begin{proposition}[All-failure practical-zero rule]
\label{prop:allfailure}
Fix $\varepsilon\in(0,1)$ and $\alpha\in(0,1)$.  Consider the rule that declares $p<\varepsilon$ after observing $n$ consecutive failures.  If
\[
    (1-\varepsilon)^n\leq \alpha,
\]
then
\[
    \sup_{p\geq \varepsilon}
    \Pp_p(Y_1=\cdots=Y_n=0)\leq \alpha.
\]
Equivalently, it is sufficient that
\[
    n\geq \frac{\log \alpha}{\log(1-\varepsilon)}.
\]
\end{proposition}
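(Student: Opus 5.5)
The plan is a direct computation in three steps. First, by independence, for any fixed $p\in[0,1]$ the probability of an all-failure run factors as
\[
    \Pp_p(Y_1=\cdots=Y_n=0)=\prod_{i=1}^n \Pp_p(Y_i=0)=(1-p)^n .
\]
Second, the map $p\mapsto(1-p)^n$ is nonincreasing on $[0,1]$, since $n\ge1$ and $1-p\in[0,1]$. Hence over the composite alternative $\{p\geq\varepsilon\}$ the supremum is attained at the boundary point $p=\varepsilon$, which gives
\[
    \sup_{p\geq\varepsilon}\Pp_p(Y_1=\cdots=Y_n=0)=(1-\varepsilon)^n .
\]
The hypothesis $(1-\varepsilon)^n\leq\alpha$ then yields the claimed bound. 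This reduction of a composite supremum to its least favourable boundary value is the only conceptual point in the argument, and it is immediate from monotonicity.

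Third, for the equivalent sample-size form, take logarithms. Both sides of $(1-\varepsilon)^n\leq\alpha$ are positive, so the condition is equivalent to $n\log(1-\varepsilon)\leq\log\alpha$. Since $\varepsilon\in(0,1)$, we have $\log(1-\varepsilon)<0$. Dividing by this negative quantity reverses the inequality and gives $n\geq\log\alpha/\log(1-\varepsilon)$. Because each step is an equivalence, this condition is exactly as strong as the one in the statement, so it suffices.

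There is no real obstacle here. The only points needing care are keeping track of the sign flip when dividing by $\log(1-\varepsilon)$, and noting that the bound is uniform over the composite set $\{p\geq\varepsilon\}$, not just at a single simple alternative.
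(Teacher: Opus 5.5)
Your proof is correct and follows the paper's argument: both compute $\Pp_p(Y_1=\cdots=Y_n=0)=(1-p)^n$ by independence and bound it by $(1-\varepsilon)^n\leq\alpha$ using monotonicity in $p$. Your logarithmic rearrangement, including the sign flip from $\log(1-\varepsilon)<0$, supplies the equivalence that the paper leaves implicit; one cosmetic slip is that $\{p\geq\varepsilon\}$ is the null hypothesis here (the region where declaring $p<\varepsilon$ is an error), not the alternative.
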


\begin{proof}
For every $p\geq \varepsilon$,
\[
    \Pp_p(Y_1=\cdots=Y_n=0)=(1-p)^n\leq (1-\varepsilon)^n\leq \alpha.
\]
This proves the claimed type-I error bound.
\end{proof}

\begin{corollary}[All-success practical-one rule]
\label{cor:allsuccess}
If $n$ consecutive successes are observed and $(1-\varepsilon)^n\leq \alpha$, then the rule declaring $p>1-\varepsilon$ has type-I error at most $\alpha$ under $p\leq 1-\varepsilon$.
\end{corollary}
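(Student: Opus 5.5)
The plan is to reduce Corollary~\ref{cor:allsuccess} to Proposition~\ref{prop:allfailure} by relabelling successes and failures. Set $Y_i'=1-Y_i$. Under $\Pp_p$ the $Y_i'$ are again iid, now $\operatorname{Bernoulli}(p')$ with $p'=1-p$. The event of $n$ consecutive successes, $\{Y_1=\cdots=Y_n=1\}$, is exactly the event of $n$ consecutive failures of the transformed sequence, $\{Y_1'=\cdots=Y_n'=0\}$. The null hypothesis $p\leq 1-\varepsilon$ becomes $p'\geq\varepsilon$, and the declaration $p>1-\varepsilon$ becomes $p'<\varepsilon$. In the primed variables, the all-success rule is therefore literally the all-failure rule of Proposition~\ref{prop:allfailure}, with the same $\varepsilon$, $\alpha$ and $n$.

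The key steps, in order, are as follows.
\begin{enumerate}[label=(\roman*)]
\item Verify the distributional identity: for each $p$, the law of $(Y_1',\dots,Y_n')$ under $\Pp_p$ equals the law of $(Y_1,\dots,Y_n)$ under $\Pp_{1-p}$.
\item Use it to rewrite the type-I error:
\[
\sup_{p\leq 1-\varepsilon}\Pp_p(Y_1=\cdots=Y_n=1)=\sup_{p'\geq\varepsilon}\Pp_{p'}(Y_1=\cdots=Y_n=0).
\]
\item Apply Proposition~\ref{prop:allfailure} under the hypothesis $(1-\varepsilon)^n\leq\alpha$ to bound the right-hand side by $\alpha$.
\end{enumerate}

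As a sanity check, the direct computation gives the same result: for $p\leq 1-\varepsilon$,
\[
\Pp_p(Y_1=\cdots=Y_n=1)=p^n\leq(1-\varepsilon)^n\leq\alpha .
\]
This uses only that $t\mapsto t^n$ is nondecreasing on $[0,1]$. I would present this one-line computation as the proof and mention the symmetry as the conceptual reason it works.

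There is no real obstacle. The only points needing care are to match the boundary conventions correctly (the null is the closed set $p\leq 1-\varepsilon$, which corresponds to $p\geq\varepsilon$ in Proposition~\ref{prop:allfailure}) and to note that the equivalent sample-size condition $n\geq\log\alpha/\log(1-\varepsilon)$ carries over unchanged.
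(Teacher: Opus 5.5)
Your proposal is correct and takes the paper's approach: the paper gives no separate proof, treating the corollary as the success/failure mirror image of Proposition~\ref{prop:allfailure}. Your one-line bound $\Pp_p(Y_1=\cdots=Y_n=1)=p^n\leq(1-\varepsilon)^n\leq\alpha$ for $p\leq 1-\varepsilon$ is exactly that proposition's proof with $1-p$ replaced by $p$, and the relabelling $Y_i'=1-Y_i$ is a clean way to make the symmetry explicit.
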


Proposition~\ref{prop:allfailure} and Corollary~\ref{cor:allsuccess} are intentionally simple.  They show that a finite run can justify a one-sided practical statement with specified error control.  They do not justify $p=0$ or $p=1$.

\begin{remark}[Classical comparators]
\label{rem:classical}
The all-failure rule above is the $s=0$ special case of the standard one-sided binomial test of $H_0:p\geq\varepsilon$ against $H_1:p<\varepsilon$, with exact $p$-value $(1-\varepsilon)^n$ and dual upper Clopper--Pearson bound $U_n(0;\alpha)=1-\alpha^{1/n}$ \citep{clopper1934,lehmann2005}.  For sequential problems where the stopping time is data-dependent, fixed-time intervals are replaced by confidence sequences valid under optional stopping \citep{howard2021}.  When the question is instead a simple-versus-simple comparison between two pre-specified probabilities, Wald's SPRT is the canonical tool \citep{wald1945,wald1947}.  None of these classical methods addresses the composite, model-based, or trajectory-stability aspects of the boundary problem studied here; the present paper adds content in those directions rather than replacing the classical tools.  A brief SPRT comparison is in Appendix~\ref{app:sprt_benchmark}.
\end{remark}

\subsection{Smoothing and time-uniform uncertainty}

A common stabilized estimate arises from a beta prior,
\[
    p\sim \operatorname{Beta}(a,b),
    \qquad a,b>0.
\]
After observing $S_n$ successes in $n$ trials,
\[
    p\given \D_n\sim \operatorname{Beta}(a+S_n,b+n-S_n),
    \qquad
    \D_n=\sigma(Y_1,\ldots,Y_n),
\]
and the posterior mean is
\[
    \tilde p_n=\E(p\given \D_n)=\frac{S_n+a}{n+a+b}.
\]
This estimate is strictly inside $(0,1)$ for every finite $n$ when $a,b>0$.

For sequential inference, the more important object is not the smoothed point estimate but a stopping-valid uncertainty statement.  A confidence sequence is a sequence of random intervals $(C_n)$ satisfying
\[
    \Pp_p\{p\in C_n\text{ for all }n\geq 1\}\geq 1-\alpha,
    \qquad p\in[0,1].
\]
Such intervals are valid under arbitrary data-dependent stopping times; see, for example, \citet{wald1945} and \citet{howard2021}.  A practical-zero stopping time can be written as
\[
    \tau_0=\inf\{n:C_n\subseteq [0,\varepsilon]\},
\]
and a practical-one stopping time as
\[
    \tau_1=\inf\{n:C_n\subseteq [1-\varepsilon,1]\}.
\]
On the event that the confidence sequence covers $p$ at all times, $\tau_0$ cannot falsely declare practical zero when $p>\varepsilon$, and $\tau_1$ cannot falsely declare practical one when $p<1-\varepsilon$.

\begin{proposition}[Sequential validity of confidence-sequence stopping]
\label{prop:csstop}
Let $(C_n)$ be a $(1-\alpha)$ confidence sequence for $p$.  Define $\tau_0$ and $\tau_1$ as above.  Then
\[
    \Pp_p(\tau_0<\infty\text{ and }p>\varepsilon)\leq \alpha,
    \qquad
    \Pp_p(\tau_1<\infty\text{ and }p<1-\varepsilon)\leq \alpha.
\]
\end{proposition}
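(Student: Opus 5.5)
The plan is to reduce both bounds to the single coverage event of the confidence sequence. Let
\[
    E=\{p\in C_n\text{ for all }n\geq 1\},
\]
so that $\Pp_p(E)\geq 1-\alpha$ for every $p\in[0,1]$ by the definition of a $(1-\alpha)$ confidence sequence. The parameter $p$ is fixed, not random, so the event $\{\tau_0<\infty\text{ and }p>\varepsilon\}$ is either empty (when $p\leq\varepsilon$, in which case the bound is trivial) or equals $\{\tau_0<\infty\}$ (when $p>\varepsilon$). The argument therefore splits into these two cases, and only the second needs work.

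In the case $p>\varepsilon$, I will show that $\{\tau_0<\infty\}\subseteq E^c$. On $\{\tau_0<\infty\}$, the set in the definition of $\tau_0$ is nonempty, so there is a finite index $n=\tau_0$ with $C_{n}\subseteq[0,\varepsilon]$. Since $p>\varepsilon$, we have $p\notin[0,\varepsilon]$, hence $p\notin C_{n}$, and so coverage fails at time $n$. Consequently
\[
    \Pp_p(\tau_0<\infty)\leq\Pp_p(E^c)\leq\alpha.
\]
The statement for $\tau_1$ is symmetric: if $p<1-\varepsilon$ and $\tau_1<\infty$, then $C_{\tau_1}\subseteq[1-\varepsilon,1]$ excludes $p$, and the same chain of inequalities applies.

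The main point to watch is the role of time-uniformity, and this is where the argument must be stated carefully rather than where it is hard. The random index $\tau_0$ is data-dependent, so a fixed-$n$ coverage guarantee $\Pp_p(p\in C_n)\geq1-\alpha$ would not control $\Pp_p(p\notin C_{\tau_0})$. The simultaneous-coverage event $E$ avoids this entirely, because the inclusion $\{\tau_0<\infty\}\subseteq E^c$ is a pathwise statement that needs no optional-stopping theorem. I would also note that measurability of $\tau_0$ and $\tau_1$ (for instance, that $C_n$ is $\D_n$-measurable) is not needed for the bound itself; it matters only for interpreting these quantities as stopping times.
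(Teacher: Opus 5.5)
Your proof is correct and is essentially the paper's argument: on $\{\tau_0<\infty\}$ with $p>\varepsilon$ (resp.\ $\{\tau_1<\infty\}$ with $p<1-\varepsilon$), the set $C_{\tau_0}$ (resp.\ $C_{\tau_1}$) excludes $p$. The false declaration therefore lies inside the non-coverage event, whose probability is at most $\alpha$. Your explicit case split on the fixed value of $p$, and your remark that simultaneous rather than fixed-$n$ coverage is what makes the data-dependent index harmless, spell out what the paper's one-line proof leaves implicit.
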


\begin{proof}
If $p\in C_n$ for all $n$, then $C_{\tau_0}\subseteq[0,\varepsilon]$ implies $p\leq\varepsilon$, and $C_{\tau_1}\subseteq[1-\varepsilon,1]$ implies $p\geq 1-\varepsilon$.  Therefore either type of false boundary declaration can occur only on the event that the confidence sequence fails to cover $p$ at some time, which has probability at most $\alpha$.
\end{proof}

\section{Logistic regression and separation}
\label{sec:logistic}

Let $(Y_i,x_i)$ be sequentially observed, where $Y_i\in\{0,1\}$ and $x_i\in\R^d$.  The logistic model is
\[
    \Pp(Y_i=1\given x_i)=\pi_i(\beta)=\expit(x_i^\top\beta)
    =\frac{\exp(x_i^\top\beta)}{1+\exp(x_i^\top\beta)}.
\]
The log-likelihood after $n$ observations is
\[
    \ell_n(\beta)=\sum_{i=1}^n\left\{y_i x_i^\top\beta-
    \log\bigl(1+\exp(x_i^\top\beta)\bigr)\right\}.
\]

\begin{proposition}[No exact boundary probabilities for finite logistic coefficients]
\label{prop:logisticinterior}
For every finite $\beta\in\R^d$ and every covariate vector $x\in\R^d$,
$0<\expit(x^\top\beta)<1$.
Consequently, fitted probabilities equal to zero or one cannot arise from a finite logistic-regression coefficient vector.
\end{proposition}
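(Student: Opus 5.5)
The plan is to reduce the statement to a one-variable fact about the logistic function. For finite $\beta\in\R^d$ and $x\in\R^d$, the linear predictor $t=x^\top\beta=\sum_{j=1}^d x_j\beta_j$ is a finite real number, since it is a finite sum of products of finite reals. So it suffices to show that $0<\expit(t)<1$ for every $t\in\R$.

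For the scalar claim, write
\[
    \expit(t)=\frac{e^t}{1+e^t}, \qquad 1-\expit(t)=\frac{1}{1+e^t}.
\]
Because $e^t\in(0,\infty)$ for every real $t$, each numerator is strictly positive and each denominator is finite and strictly positive. Hence both $\expit(t)$ and $1-\expit(t)$ are strictly positive, which is exactly $0<\expit(t)<1$.

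The ``consequently'' part follows by applying this to every fitted value $\pi_i(\beta)=\expit(x_i^\top\beta)$. A fitted probability of $0$ or $1$ could therefore only appear as a limit along a sequence with $|x_i^\top\beta_k|\to\infty$, hence $\|\beta_k\|\to\infty$. It can never appear at a finite coefficient vector.

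No step here is a real obstacle. The only point needing care is to state explicitly that ``finite'' means $\beta\in\R^d$, not the extended reals. It would also help to note that boundary values are attained only in the limit, since that is the link to separation, where the MLE diverges, in the rest of Section~\ref{sec:logistic}.
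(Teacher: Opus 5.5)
Your proof is correct and is the paper's argument: the paper says only that the claim is immediate because $\exp(x^\top\beta)\in(0,\infty)$ for finite arguments, and your reduction to the scalar $t=x^\top\beta$, checking that both $\expit(t)$ and $1-\expit(t)=1/(1+e^t)$ are positive, spells this out. Your extra remark that boundary values arise only as limits with $\|\beta_k\|\to\infty$ is also sound, via $|x_i^\top\beta_k|\le\|x_i\|\,\|\beta_k\|$, and matches the paper's subsequent discussion of separation.
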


This is immediate from $\exp(x^\top\beta)\in(0,\infty)$ for all finite arguments.

Complete or quasi-complete separation is the logistic-regression counterpart of apparent Bernoulli boundary behavior.  If there exists $a\in\R^d$ such that
\[
    x_i^\top a>0 \quad \text{for all }y_i=1,
    \qquad
    x_i^\top a<0 \quad \text{for all }y_i=0,
\]
then moving $\beta$ along $ta$ with $t\to\infty$ drives fitted probabilities toward one for successes and zero for failures.  The likelihood may approach its supremum without attaining it at any finite $\beta$.  This is the classical existence problem for logistic maximum likelihood under separation \citep{albert1984}.

A stable sequential analysis should therefore regularize the coefficient estimate and base stopping on predictive probabilities and uncertainty, not on divergent coefficient estimates.  Standard choices include Firth's bias-reduced likelihood \citep{firth1993,heinze2002}, ridge-penalized logistic regression, and Bayesian logistic regression with a proper prior such as a normal or weakly informative Student prior \citep{gelman2008}.  For example, ridge logistic regression estimates
\[
    \hat\beta_{n,\lambda}=\argmax_{\beta\in\R^d}
    \left\{\ell_n(\beta)-\frac{\lambda}{2}\|\beta\|_2^2\right\},
    \qquad \lambda>0,
\]
which remains finite under separation.

For a covariate region $A\subseteq\R^d$, a practical stopping rule should be phrased in terms of the predictive surface
\[
    \pi(x;\beta)=\expit(x^\top\beta).
\]
A Bayesian practical-one rule, for example, is
\[
\begin{aligned}
    \tau_1(A)=\inf\Bigl\{n:
    &\Pp\left(\inf_{x\in A}\pi(x;\beta)>1-\varepsilon\given\D_n\right)\\
    &\geq 1-\alpha\Bigr\},
\end{aligned}
\]
and the corresponding practical-zero rule is
\[
\begin{aligned}
    \tau_0(A)=\inf\Bigl\{n:
    &\Pp\left(\sup_{x\in A}\pi(x;\beta)<\varepsilon\given\D_n\right)\\
    &\geq 1-\alpha\Bigr\}.
\end{aligned}
\]
Frequentist analogues replace posterior probabilities by simultaneous one-sided confidence bounds over $A$.

\section{Reverse-martingale formulation of boundary degeneracy}
\label{sec:rm}

The forward sequential view treats boundary behavior as a finite-sample decision under uncertainty.  The reverse-martingale view treats boundary behavior as a limiting property of conditional probabilities under a decreasing information structure.

\begin{theorem}[Conditional-probability reverse martingale]
\label{thm:rm}
Let $Y\in L^1$ and let
$
    \G_1\supseteq \G_2\supseteq \cdots
$
be a decreasing filtration.  Define
$
    M_n=\E(Y\given\G_n),
    \qquad n\geq 1.
$
Then $(M_n,\G_n)$ is a reverse martingale; that is,
$
    \E(M_n\given\G_{n+1})=M_{n+1},~ n\geq 1.
$
Moreover, if $\G_\infty=\cap_{n\geq1}\G_n$, then
$
    M_n\longrightarrow M_\infty=\E(Y\given\G_\infty)
$
almost surely and in $L^1$.
\end{theorem}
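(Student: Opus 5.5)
The reverse-martingale identity comes straight from the tower property. Since $\G_{n+1}\subseteq\G_n$ and $Y\in L^1$, each $M_n=\E(Y\given\G_n)$ is integrable and $\G_n$-measurable, and
\[
    \E(M_n\given\G_{n+1})=\E\bigl(\E(Y\given\G_n)\given\G_{n+1}\bigr)=\E(Y\given\G_{n+1})=M_{n+1}.
\]
The convergence statement needs more work. I will first establish uniform integrability and then convert the reverse martingale into a forward one on finite blocks, so that Doob's upcrossing inequality applies.

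\emph{Uniform integrability.} Conditional Jensen gives $|M_n|\le\E(|Y|\given\G_n)$. For $K>0$, the event $\{|M_n|>K\}$ lies in $\G_n$, so
\[
    \E\bigl(|M_n|\ind\{|M_n|>K\}\bigr)\le\E\bigl(|Y|\ind\{|M_n|>K\}\bigr).
\]
Markov's inequality gives $\Pp(|M_n|>K)\le\E|Y|/K$, uniformly in $n$. Absolute continuity of the integral of $|Y|$ then makes the right-hand side small uniformly in $n$ once $K$ is large. Hence $\{M_n\}$ is uniformly integrable. In our application $Y\in\{0,1\}$, so $0\le M_n\le1$ and this step is trivial.

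\emph{Almost sure convergence.} This is the main step. Fix $N$ and rationals $a<b$. The reversed sequence $(M_N,M_{N-1},\ldots,M_1)$, taken with the increasing $\sigma$-fields $(\G_N,\ldots,\G_1)$, is a forward martingale of length $N$. Doob's upcrossing inequality gives
\[
    \E U_N[a,b]\le\frac{\E|M_1|+|a|}{b-a}\le\frac{\E|Y|+|a|}{b-a},
\]
where $U_N[a,b]$ counts the upcrossings of $[a,b]$ by this reversed sequence. The bound does not depend on $N$, and $U_N[a,b]$ increases with $N$. By monotone convergence the total number of crossings of $[a,b]$ by $(M_n)$ is therefore finite almost surely. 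Taking a union over rational pairs $a<b$ shows that $\liminf M_n=\limsup M_n$ almost surely. Fatou's lemma together with $\sup_n\E|M_n|\le\E|Y|$ shows that the common limit $M_\infty$ is finite almost surely.

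\emph{$L^1$ convergence and identification of the limit.} Uniform integrability combined with almost sure convergence gives $M_n\to M_\infty$ in $L^1$ (Vitali). To show $M_\infty=\E(Y\given\G_\infty)$, I proceed in two steps.
\begin{itemize}
\item[(i)] Measurability: for every $m$, the limit $M_\infty=\lim_{n\ge m}M_n$ is $\G_m$-measurable, after choosing $\limsup$ as the version. Hence $M_\infty$ is $\G_\infty$-measurable.
\item[(ii)] Defining property: take $A\in\G_\infty$. Then $A\in\G_n$ for every $n$, so $\E(M_n\ind_A)=\E(Y\ind_A)$. The $L^1$ convergence lets us pass to the limit and obtain $\E(M_\infty\ind_A)=\E(Y\ind_A)$.
\end{itemize}
This is exactly the defining property of $\E(Y\given\G_\infty)$.

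The only nontrivial obstacle is the almost sure convergence step, specifically getting an upcrossing bound that is uniform in $N$. Reversing time on finite blocks resolves it, because the bound depends only on $\E|M_1|\le\E|Y|$. Alternatively, one can cite the backward martingale convergence theorem of \citet{doob1953} or \citet{durrett2019} directly.
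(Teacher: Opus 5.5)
Your proof is correct and follows the paper's route. The reverse-martingale identity $\E(M_n\given\G_{n+1})=M_{n+1}$ comes from the tower property exactly as in the paper. For convergence, the paper simply cites the backward martingale convergence theorem of \citet{doob1953} and \citet{durrett2019}, and what you have written out is that theorem's standard textbook proof: upcrossings on time-reversed finite blocks, uniform integrability of $\{\E(Y\given\G_n)\}$, Vitali, and identification of the limit by testing against $A\in\G_\infty$.
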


\begin{proof}
Because $\G_{n+1}\subseteq\G_n$, the tower property gives
\[
    \E(M_n\given\G_{n+1})=
    \E\{\E(Y\given\G_n)\given\G_{n+1}\}
    =\E(Y\given\G_{n+1})=M_{n+1}.
\]
The convergence assertion is the reverse-martingale convergence theorem; see \citet{doob1953} or \citet{durrett2019}.
\end{proof}

For binary $Y$, $0\leq M_n\leq1$ and $M_n$ has the interpretation of a conditional success probability.  Theorem~\ref{thm:rm} therefore turns boundary behavior into a statement about the limiting conditional law.

\begin{example}[Reverse martingale for a Bernoulli outcome under coarsening]
\label{ex:rm_bernoulli}
Let $Y\in\{0,1\}$ with $\Pp(Y=1)=p=0.1$, and suppose $n=5$ binary observations
$Y_1,\ldots,Y_5\overset{\mathrm{iid}}{\sim}\operatorname{Bernoulli}(0.1)$ are drawn.
Consider the \emph{decreasing} filtration
\[
    \G_n = \sigma(Y_n,Y_{n+1},\ldots,Y_5),
    \qquad n=5,4,3,2,1,
\]
so that $\G_5\subseteq\G_4\subseteq\cdots\subseteq\G_1$ and
$\G_\infty=\bigcap_{n=1}^{5}\G_n=\G_5=\sigma(Y_5)$.

\medskip
\noindent\textbf{Trajectory.}
Suppose the realized sequence is $(Y_5,Y_4,Y_3,Y_2,Y_1)=(0,0,0,0,1)$, i.e.\ four
failures followed (in reverse reading) by one success.  At each stage:

\begin{center}
\renewcommand{\arraystretch}{1.35}
\footnotesize
\begin{tabular}{ccll}
\toprule
Step $n$ & Information in $\G_n$ & $M_n=\E(Y\mid\G_n)$ & Comment\\
\midrule
5 & $\{Y_5=0\}$ & $\Pp(Y=1\mid Y_5=0)=p=0.1$ &
  $Y_5$ independent of $Y$; no update\\
4 & $\{Y_4=0,Y_5=0\}$ & $0.1$ &
  Still all failures; posterior unchanged\\
3 & $\{Y_3=0,Y_4=0,Y_5=0\}$ & $0.1$ &
  Apparent zero run grows\\
2 & $\{Y_2=0,\ldots,Y_5=0\}$ & $0.1$ &
  Run of 4 failures; $M_2$ stays at $p$\\
1 & $\{Y_1=1,Y_2=0,\ldots,Y_5=0\}$ & $0.1$ &
  One success revealed; still $p$\\
$\infty$ & $\G_\infty=\sigma(Y_5)$ & $M_\infty=p=0.1$ & Limit is the true probability\\
\bottomrule
\end{tabular}
\end{center}

\noindent
Because $Y_1,\ldots,Y_5$ are i.i.d.\ and independent of $Y$ (they are side-observations, not
$Y$ itself), each $\G_n$ carries no information about $Y$ beyond what is contained in the
marginal law.  Consequently $M_n=\E(Y\mid\G_n)=p$ for every $n$, and the reverse
martingale is constant: the trajectory does not approach a boundary.

\medskip
\noindent\textbf{Boundary scenario.}
Now replace the setup so that $Y$ is the \emph{first future trial}, and let
$\G_n=\sigma(Y_{n+1},\ldots,Y_{n+k})$ for a sliding look-ahead window.  If the window
repeatedly shows only failures ($S_k=0$), the posterior under a
$\operatorname{Beta}(a,b)$ prior updates to
\[
    M_n = \E(Y\mid\G_n)
    = \frac{a}{a+b+k}
    \xrightarrow[k\to\infty]{} 0.
\]
With $a=b=0.5$ (Jeffreys prior) and $k$ failures in the window:

\begin{center}
\renewcommand{\arraystretch}{1.35}
\begin{tabular}{rll}
\toprule
Failures $k$ & $M_n=(0.5)/(1+k)$ & Boundary distance $B_n=M_n$\\
\midrule
0  & $0.500$ & $0.500$\\
1  & $0.333$ & $0.333$\\
4  & $0.100$ & $0.100$\\
9  & $0.050$ & $0.050$\\
19 & $0.025$ & $0.025$\\
$\infty$ & $0$ & $0$ (boundary limit)\\
\bottomrule
\end{tabular}
\end{center}

\noindent
The sequence $(M_n)$ is a bounded reverse martingale converging to $M_\infty=0$, illustrating
\emph{exact boundary degeneracy} in the reverse-martingale limit (Definition~\ref{def:rmdeg}).
For the practical-degeneracy criterion with $\varepsilon=0.05$, the trajectory enters the
boundary zone at $k=9$ failures, i.e.\ once $M_n\leq\varepsilon$.

\medskip
\noindent\textbf{Reverse-martingale property check.}
Direct verification: for $n<n'$ (so $\G_{n'}\subseteq\G_n$),
\[
    \E(M_n\mid\G_{n'})
    =\E\{\E(Y\mid\G_n)\mid\G_{n'}\}
    =\E(Y\mid\G_{n'})
    =M_{n'},
\]
by the tower property, confirming that $(M_n,\G_n)$ satisfies the defining identity of
Theorem~\ref{thm:rm}.  The trajectory in the boundary scenario above is therefore a valid
reverse martingale; the limiting conditional law collapses toward zero, whereas no finite
$M_n$ is identically zero.
\end{example}

\begin{definition}[Reverse-martingale boundary degeneracy]
\label{def:rmdeg}
Let $Y\in\{0,1\}$ and $M_n=\E(Y\given\G_n)$.  The process is exactly boundary-degenerate in the reverse-martingale limit on an event $E$ if
\[
    M_\infty\in\{0,1\}
    \quad\text{on }E.
\]
For $\varepsilon\in(0,1/2)$, it is practically boundary-degenerate on $E$ if
\[
    M_\infty\leq\varepsilon
    \quad\text{or}\quad
    M_\infty\geq1-\varepsilon
    \quad\text{on }E.
\]
\end{definition}

This definition deliberately concerns $M_\infty$, not a finite $M_n$.  A finite conditional probability may be close to zero or one because of transient data imbalance, early separation, or model instability.  The reverse-martingale question is whether the limiting conditional law is moving toward a boundary.

\begin{remark}[Relation to increasing filtrations]
Sequential observation is usually modelled with an increasing filtration $\F_1\subseteq\F_2\subseteq\cdots$.  The decreasing filtration here arises naturally when the object of interest is a sequence of coarsened or backward-read conditional descriptions, so that the relevant information sets shrink and the limiting object is the intersection $\G_\infty$.
\end{remark}

\section{Boundary stopping with uncertainty and stability}
\label{sec:stopping}

A practical decision rule should distinguish three situations:
\begin{enumerate}[label=(\roman*)]
    \item interior uncertainty, where the conditional probability is far from both zero and one;
    \item unstable apparent certainty, where the conditional probability is close to a boundary but the trajectory is not stable;
    \item stable practical degeneracy, where the conditional probability is close to a boundary and the trajectory has stabilized.
\end{enumerate}

Let
$ B_n=\min\{M_n,1-M_n\} $
be the boundary distance.  Boundary closeness alone corresponds to $B_n\leq\varepsilon$.  To avoid stopping on transient extremes, one also needs an uncertainty measure and, in computed models, a stability diagnostic.

Suppose that $(I_n)$ is a time-uniform interval sequence for the limiting boundary target $M_\infty$:
\[
    \Pp\{M_\infty\in I_n\text{ for all }n\geq1\}\geq1-\alpha.
\]
The construction of such intervals is model-specific: it may come from a Bayesian posterior calculation, a bootstrap or subsampling scheme with appropriate calibration, or a problem-specific martingale bound.  Once such intervals are available, the stopping logic is immediate.

\begin{proposition}[Stopping validity for a limiting boundary target]
\label{prop:mstop}
Let $(I_n)$ satisfy
\[
    \Pp\{M_\infty\in I_n\text{ for all }n\geq1\}\geq1-\alpha.
\]
Define
\[
    \tau_0=\inf\{n:I_n\subseteq[0,\varepsilon]\},
    \qquad
    \tau_1=\inf\{n:I_n\subseteq[1-\varepsilon,1]\}.
\]
Then
\[
    \Pp(\tau_0<\infty\text{ and }M_\infty>\varepsilon)\leq\alpha,
    \qquad
    \Pp(\tau_1<\infty\text{ and }M_\infty<1-\varepsilon)\leq\alpha.
\]
\end{proposition}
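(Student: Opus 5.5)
The argument copies the proof of Proposition~\ref{prop:csstop}, with the fixed parameter $p$ replaced by the random target $M_\infty$. Let $\Omega_\alpha=\{M_\infty\in I_n\text{ for all }n\geq1\}$ be the good event; by hypothesis $\Pp(\Omega_\alpha^c)\leq\alpha$. I will show that each false-declaration event is contained in $\Omega_\alpha^c$.

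\emph{Practical-zero case.} Fix an outcome in $\{\tau_0<\infty\}\cap\Omega_\alpha$. Since $\tau_0$ is finite there, the index $n_0=\tau_0$ is a finite integer at which $I_{n_0}\subseteq[0,\varepsilon]$. Because the outcome lies in $\Omega_\alpha$, we have $M_\infty\in I_{n_0}$, and hence $M_\infty\leq\varepsilon$. Therefore
\[
\{\tau_0<\infty\text{ and }M_\infty>\varepsilon\}\subseteq\Omega_\alpha^c .
\]

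\emph{Practical-one case.} The same argument applies: on $\{\tau_1<\infty\}\cap\Omega_\alpha$ we have $M_\infty\in I_{\tau_1}\subseteq[1-\varepsilon,1]$, so
\[
\{\tau_1<\infty\text{ and }M_\infty<1-\varepsilon\}\subseteq\Omega_\alpha^c .
\]
Monotonicity of $\Pp$ then gives both bounds.

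\emph{Points needing care.} Unlike Proposition~\ref{prop:csstop}, here $M_\infty$ is a random variable rather than a fixed parameter, so the events must be checked to be measurable.
\begin{enumerate}[label=(\roman*)]
\item $M_\infty$ is $\G_\infty$-measurable by Theorem~\ref{thm:rm}.
\item $\tau_0$ and $\tau_1$ are measurable provided each $I_n$ is a random interval with measurable endpoints. The condition $I_n\subseteq[0,\varepsilon]$ then reads as $\{\text{upper endpoint}\leq\varepsilon\}$, and $\{\tau_0<\infty\}=\bigcup_n\{I_n\subseteq[0,\varepsilon]\}$ is a countable union of measurable sets.
\item The containment argument uses only the single time $\tau_0(\omega)$. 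So no optional-stopping theorem is needed: time-uniform coverage already handles the data-dependent index.
\end{enumerate}

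This measurability bookkeeping is the only potential obstacle. I would treat it as a standing assumption on the construction of $(I_n)$ rather than prove it in general.
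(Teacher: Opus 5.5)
Your proof is correct and is essentially the paper's argument: the paper simply repeats the proof of Proposition~\ref{prop:csstop} with $p$ replaced by $M_\infty$, showing that a false declaration at $\tau_0$ or $\tau_1$ can occur only when time-uniform coverage fails. Your measurability remarks go beyond what the paper says but do not change the argument; one small correction is that $I_n\subseteq[0,\varepsilon]$ reduces to an upper-endpoint condition only when the intervals already lie in $[0,1]$, and otherwise a lower-endpoint condition is also needed.
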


\begin{proof}
The proof is identical to Proposition~\ref{prop:csstop}, replacing $p$ by $M_\infty$.
\end{proof}

In computational implementations, let $r_n\geq0$ denote a stability defect.  Examples include
\[
    r_n=|M_{n+1}-\widehat{\E}(M_n\given\G_{n+1})|
\]
when conditional expectations are estimated directly, or
\[
    r_n=\|H_n-g_\phi(H_{n+1})\|
\]
when a latent state $H_n$ is regularized by a backward projection $g_\phi$.  A practical boundary-stopping rule can then be written as
\[
    \tau_{\mathrm{RM}}
    =\inf\left\{n\geq n_{\min}:
        B_n\leq\varepsilon,
        \quad r_n\leq\eta,
        \quad W_n\leq w
    \right\},
\]
where $W_n$ is an uncertainty width and $\eta,w$ are pre-specified thresholds.  The rule operationalizes the core message: a near-boundary estimate becomes actionable only when it is statistically supported \emph{and} dynamically stable.  Its validity in any specific application depends on the calibration of $W_n$ and $r_n$, which are problem-specific.

\begin{remark}[Finite estimates versus limiting laws]
The reverse-martingale approach does not prove exact probabilities zero or one from finite data.  Its value is to separate a finite near-boundary estimate from a limiting boundary law.  This distinction is important in sequential applications because temporary separation and stable boundary behavior can look similar if only the current point estimate is inspected.
\end{remark}

\section{Numerical studies}
\label{sec:numerical}

The numerical studies are designed to illustrate the distinction between apparent finite-sample boundary behavior and practically justified boundary statements.  The purpose is not to compare classifiers by accuracy alone, but to examine when ordinary estimators produce boundary-looking outputs, when sequential rules can support statements such as $p\leq\varepsilon$, and when additional stability conditions are needed to distinguish transient apparent certainty from stable practical degeneracy.  All scripts used to generate the tables and figures are supplied with the manuscript source.  For the two logistic-regression experiments and the real-data rare-event illustration, we use 1,000 Monte Carlo repetitions per setting; the Bernoulli and reverse-martingale-style studies use the larger replication counts specified in their scripts.  Full data-generating mechanisms, tuning constants, instability criteria, and reproducibility details are collected in Appendix~\ref{app:numerical_details}.

The studies have a deliberately diagnostic role.  Study~1 asks how often an ordinary Bernoulli estimator equals zero even when the true success probability is positive.  Studies~2 and~3 ask when logistic regression converts rare events into separation-like numerical behavior, first in a low-dimensional model and then in a higher-dimensional sparse model.  Study~4 asks whether boundary closeness alone is a reliable stopping signal for a sequential conditional-risk trajectory.  Study~5 uses a real rare-event data set to check that the phenomenon is not only a simulation artifact.  Together, the simulations and the real-data illustration explain why the proposed revisit is not merely philosophical: the boundary phenomenon appears repeatedly in finite-sample computation, and the proposed practical-degeneracy language gives a more reliable way to interpret it.

\subsection{Bernoulli rare-event experiments}

We first consider iid Bernoulli trials with $p\in\{0.10,0.01,0.005\}$.  This setting isolates the boundary phenomenon from any covariate or model-fitting issue.  For each value of $p$, we generated Bernoulli samples at several maximum sample sizes and evaluated the ordinary estimator $\hat p_n=S_n/n$, the Jeffreys-smoothed estimator
\[
    \tilde p_n^J=\frac{S_n+1/2}{n+1},
\]
and the all-failure practical-zero stopping rule of Proposition~\ref{prop:allfailure}.  The reported probabilities for $\hat p_n=0$ and the all-failure rule are exact binomial probabilities, with Monte Carlo checks generated by the accompanying script.

% Auto-generated by 01_bernoulli_rare_event.py
\begin{table}[t]
\centering
\caption{Bernoulli rare-event simulations at $n_{\max}=1000$. The practical-zero rule stops after $n_\varepsilon=\lceil\log(0.05)/\log(1-\varepsilon)\rceil$ consecutive failures. Exact probabilities are reported for the apparent zero MLE and for practical-zero stopping.}
\label{tab:bernoulli}
\begin{tabular}{cccccc}
\toprule
$p$ & $\varepsilon$ & $n_\varepsilon$ & $P(\hat p_{1000}=0)$ & $P(\tau_0\leq1000)$ & $E(\tilde p^J_{1000})$ \\ 
\midrule
0.100 & 0.010 & 299 & 1.748e-46 & 0.0000 & 0.1004 \\ 
0.100 & 0.005 & 598 & 1.748e-46 & 0.0000 & 0.1004 \\ 
0.010 & 0.010 & 299 & 4.317e-05 & 0.0495 & 0.0105 \\ 
0.010 & 0.005 & 598 & 4.317e-05 & 0.0025 & 0.0105 \\ 
0.005 & 0.010 & 299 & 0.006654 & 0.2234 & 0.0055 \\ 
0.005 & 0.005 & 598 & 0.006654 & 0.0499 & 0.0055 \\ 
\bottomrule
\end{tabular}
\end{table}

Table~\ref{tab:bernoulli} shows the key distinction.  When $p=0.005$ and $n=1000$, the ordinary MLE is exactly zero with probability about $0.0067$, even though the true probability is not zero.  When the tolerance is $\varepsilon=0.01$, the all-failure stopping rule has probability about $0.223$ under $p=0.005$, which is appropriate because $p$ is in fact below the practical-zero tolerance.  By contrast, when $p=0.01$ and $\varepsilon=0.005$, the same rule has probability only about $0.0025$.  Thus the rule can support a practical statement such as $p<\varepsilon$ with error control, but the finite equality $\hat p_n=0$ should not be interpreted as proof that $p=0$.

\begin{figure}[t]
\centering
\includegraphics[width=0.72\textwidth]{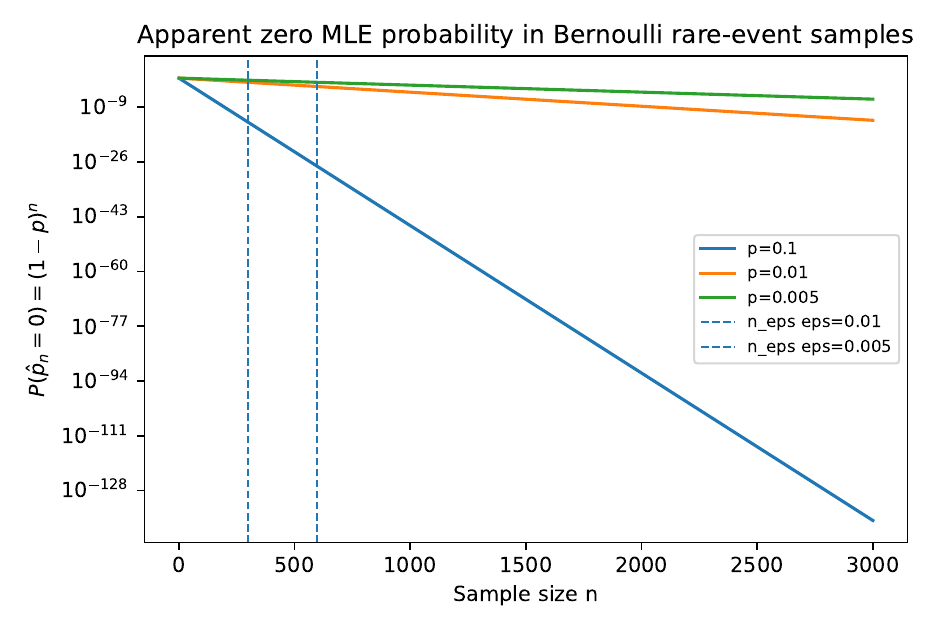}
\caption{Exact probability of an apparent zero Bernoulli MLE, $P(\hat p_n=0)=(1-p)^n$, as a function of sample size.  Vertical dashed lines mark the all-failure thresholds for $\alpha=0.05$ and $\varepsilon\in\{0.01,0.005\}$.}
\label{fig:bernoulli_zero}
\end{figure}

Figure~\ref{fig:bernoulli_zero} reinforces the same message graphically.  The apparent boundary estimate decays quickly when $p=0.10$, but can persist for rare events.  This is precisely the setting in which practical boundary statements are useful, provided they are phrased with a pre-specified tolerance and error level.

\subsection{Low-dimensional rare-event logistic experiments}

The second experiment introduces covariates while keeping the dimension small.  We generated
\[
    X_i\sim N(0,I_3),\qquad
    Y_i\mid X_i\sim \operatorname{Bernoulli}\{\expit(\beta_0+X_i^\top\beta)\},
\]
with $\beta=(1,-0.7,0.5)^\top$.  The intercept $\beta_0$ was calibrated by Monte Carlo integration so that $E\{\expit(\beta_0+X^\top\beta)\}=\rho$, where $\rho\in\{0.10,0.01,0.005\}$.  We compared ordinary logistic maximum likelihood with ridge-penalized logistic regression.  Ridge logistic regression is used here as a computationally simple representative of finite regularized estimators; Firth and Bayesian logistic regression play the same conceptual role in preventing divergent estimates under separation.

% Auto-generated by 02_lowdim_logistic.py; R=1000
\begin{table}[t]
\centering
\caption{Low-dimensional rare-event logistic simulations with three covariates and 1,000 Monte Carlo repetitions per setting. The table reports the mean number of events, one-class sample rate, ordinary-MLE instability rate, and median test log loss for ordinary MLE and ridge logistic regression.}
\label{tab:lowdim}
\begin{tabular}{ccccccc}
\toprule
Target $\rho$ & $n$ & Events & One-class & MLE unstable & Method & Median log loss \\ 
\midrule
0.100 & 100 & 9.8 & 0.000 & 0.019 & mle & 0.2699 \\ 
 &  &  &  &  & ridge & 0.2656 \\ 
0.100 & 500 & 50.0 & 0.000 & 0.000 & mle & 0.2673 \\ 
 &  &  &  &  & ridge & 0.2674 \\ 
0.100 & 2000 & 200.0 & 0.000 & 0.000 & mle & 0.2689 \\ 
 &  &  &  &  & ridge & 0.2689 \\ 
0.010 & 100 & 1.0 & 0.379 & 0.667 & mle & 0.0760 \\ 
 &  &  &  &  & ridge & 0.0559 \\ 
0.010 & 500 & 5.0 & 0.003 & 0.089 & mle & 0.0452 \\ 
 &  &  &  &  & ridge & 0.0448 \\ 
0.010 & 2000 & 19.7 & 0.000 & 0.000 & mle & 0.0477 \\ 
 &  &  &  &  & ridge & 0.0478 \\ 
0.005 & 100 & 0.5 & 0.596 & 0.801 & mle & 0.0487 \\ 
 &  &  &  &  & ridge & 0.0304 \\ 
0.005 & 500 & 2.5 & 0.088 & 0.326 & mle & 0.0295 \\ 
 &  &  &  &  & ridge & 0.0282 \\ 
0.005 & 2000 & 10.0 & 0.000 & 0.017 & mle & 0.0288 \\ 
 &  &  &  &  & ridge & 0.0285 \\ 
\bottomrule
\end{tabular}
\end{table}

Table~\ref{tab:lowdim} shows that rare events alone can create boundary-like logistic behavior.  At $\rho=0.01$ and $n=100$, the mean number of events is less than one, one-class samples occur frequently, and the ordinary logistic fit is unstable in a large fraction of repetitions.  At $\rho=0.005$, the same phenomenon is even more pronounced.  As $n$ increases, the average number of events grows and ordinary logistic regression becomes more stable.  Ridge regularization produces finite fitted probabilities in the same settings and gives more stable predictive log loss when events are sparse.

\begin{figure}[t]
\centering
\includegraphics[width=0.72\textwidth]{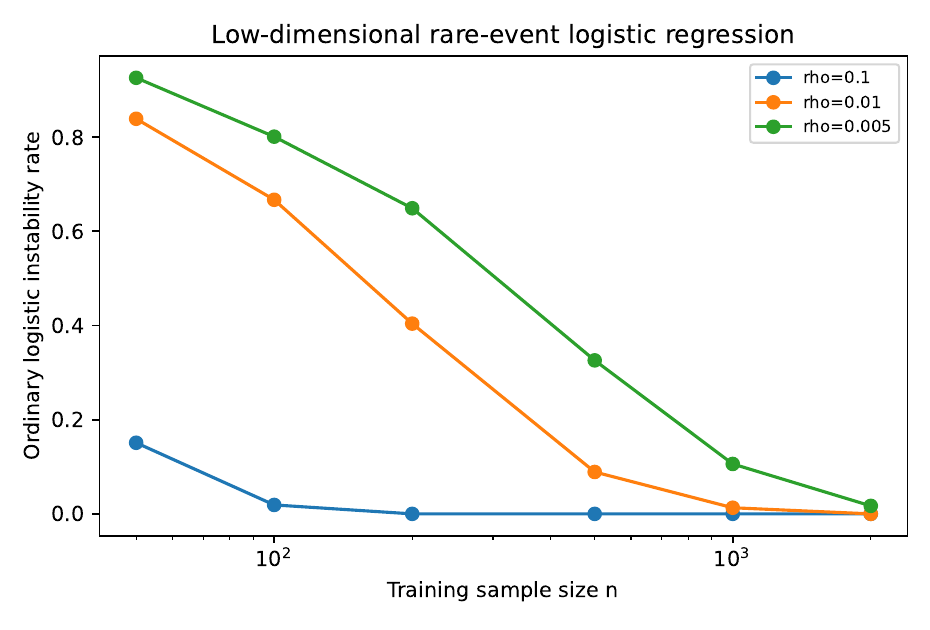}
\caption{Instability rate of ordinary logistic regression in the low-dimensional rare-event experiment.  Instability includes one-class samples, nonconvergence, excessively large coefficient norms, or fitted logits beyond a numerical boundary threshold.}
\label{fig:lowdim}
\end{figure}

Figure~\ref{fig:lowdim} shows that instability is not caused by high dimension alone.  Even with three covariates, logistic regression can exhibit separation-like behavior when the event probability is small and the sample size is modest.  This supports the interpretation in Section~\ref{sec:logistic}: apparent boundary probabilities in logistic regression should be read as separation, near-separation, or numerical divergence, not as finite-sample evidence for exact probabilities zero or one.

\subsection{High-dimensional rare-event logistic experiments}

The third experiment considers a more computationally challenging rare-event setting.  We generated $X_i\sim N(0,I_d)$ with $d\in\{20,50\}$ and used a sparse signal
\[
    \beta_j=\frac{2}{\sqrt{5}},\quad j=1,\ldots,5,
    \qquad
    \beta_j=0,
    \quad j>5.
\]
The intercept was calibrated to target prevalences $\rho\in\{0.01,0.005\}$.  We report results for $n=500$ and $n=1000$.  The summary includes the events-per-variable ratio, one-class sample rate, ordinary-MLE instability rate, and median test log loss for ordinary and ridge logistic regression.

% Auto-generated by 03_highdim_logistic.py; R=1000
\begin{table}[t]
\centering
\caption{High-dimensional rare-event logistic simulations with 1,000 Monte Carlo repetitions per setting. The table reports the mean number of events, events per variable (EPV), one-class sample rate, ordinary-MLE instability rate, and median test log loss for ordinary MLE and ridge fits.}
\label{tab:highdim}
\begin{tabular}{ccccccccc}
\toprule
$d$ & $\rho$ & $n$ & Events & EPV & One-class & MLE unstable & Method & Log loss \\ 
\midrule
20 & 0.010 & 500 & 5.0 & 0.250 & 0.007 & 0.962 & mle & 0.3886 \\ 
 &  &  &  &  &  &  & ridge & 0.0553 \\ 
20 & 0.010 & 1000 & 9.9 & 0.494 & 0.000 & 0.799 & mle & 0.0672 \\ 
 &  &  &  &  &  &  & ridge & 0.0555 \\ 
20 & 0.005 & 500 & 2.5 & 0.127 & 0.064 & 0.997 & mle & 0.1704 \\ 
 &  &  &  &  &  &  & ridge & 0.0260 \\ 
20 & 0.005 & 1000 & 5.0 & 0.252 & 0.006 & 0.957 & mle & 0.1754 \\ 
 &  &  &  &  &  &  & ridge & 0.0336 \\ 
50 & 0.010 & 500 & 5.0 & 0.100 & 0.006 & 1.000 & mle & 0.3263 \\ 
 &  &  &  &  &  &  & ridge & 0.0654 \\ 
50 & 0.010 & 1000 & 10.1 & 0.202 & 0.000 & 1.000 & mle & 0.5566 \\ 
 &  &  &  &  &  &  & ridge & 0.0627 \\ 
50 & 0.005 & 500 & 2.5 & 0.051 & 0.086 & 1.000 & mle & 0.1347 \\ 
 &  &  &  &  &  &  & ridge & 0.0369 \\ 
50 & 0.005 & 1000 & 5.1 & 0.102 & 0.003 & 1.000 & mle & 0.1995 \\ 
 &  &  &  &  &  &  & ridge & 0.0424 \\ 
\bottomrule
\end{tabular}
\end{table}

Table~\ref{tab:highdim} illustrates how rare outcomes and many variables amplify the boundary problem.  Even when the data-generating logistic model has finite coefficients, the observed sample may contain only a few events relative to the number of variables.  In such cases the ordinary logistic likelihood often behaves as if a separating direction is present, producing large fitted logits and poor out-of-sample log loss.  Ridge regularization stabilizes the fitted score and avoids many of these numerical extremes.

\begin{figure}[t]
\centering
\includegraphics[width=0.72\textwidth]{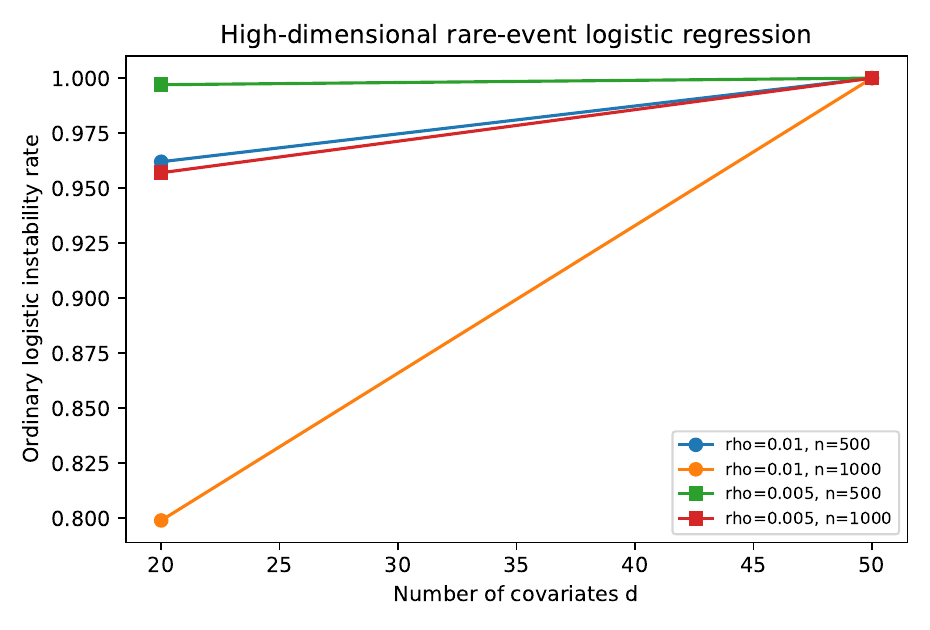}
\caption{Ordinary logistic instability in the high-dimensional rare-event experiment.  Instability becomes frequent when the number of events is small relative to the number of covariates.}
\label{fig:highdim}
\end{figure}

Figure~\ref{fig:highdim} summarizes the same pattern.  The instability rate is high for the smaller target prevalence and increases with the number of covariates.  The point is not that logistic regression is intrinsically inappropriate, but that ordinary unregularized logistic estimation should not be interpreted literally when sparse rare-event data create near-separation.

\subsection{Boundary closeness and reverse-martingale-style stability}

The final simulation isolates the role of the stability requirement in the reverse-martingale stopping rule.  We simulated three conditional-risk trajectories $M_t$: a stable boundary-convergent trajectory, a transient boundary trajectory that approaches zero temporarily and then returns to an interior value, and a stable interior trajectory.  For each path we computed the boundary distance
\[
    B_t=\min\{M_t,1-M_t\}
\]
and a simple backward-looking stability defect
\[
    r_t=\left|M_t-\frac{1}{h}\sum_{k=1}^h M_{t+k}\right|.
\]
We then compared a boundary-only rule, which stops when $B_t\leq\varepsilon$, with a boundary-plus-stability rule, which additionally requires $r_t\leq\eta$ and a decreasing uncertainty width $W_t\leq w$.

% Auto-generated by 04_rm_stability.py
\begin{table}[t]
\centering
\caption{Boundary closeness versus reverse-martingale-style stability. The boundary-only rule stops when $B_t\leq\varepsilon$. The stability rule additionally requires $r_t\leq\eta$ and $W_t\leq w$.}
\label{tab:rmstab}
\scriptsize
\begin{tabular}{p{0.40\textwidth}ccc}
\toprule
Scenario and rule & Stop probability & Mean stop time & Median stop time \\ 
\midrule
stable boundary, boundary only & 1.000 & 38.2 & 38.0 \\ 
stable boundary, boundary plus stability & 1.000 & 50.6 & 49.0 \\ 
transient boundary, boundary only & 0.999 & 48.4 & 48.0 \\ 
transient boundary, boundary plus stability & 0.107 & 47.8 & 48.0 \\ 
interior stable, boundary only & 0.000 & -- & -- \\ 
interior stable, boundary plus stability & 0.000 & -- & -- \\ 
\bottomrule
\end{tabular}
\end{table}

Table~\ref{tab:rmstab} shows that boundary closeness alone is insufficient.  In the transient-boundary scenario, the boundary-only rule stops almost always, although the risk trajectory later returns to an interior value.  Adding the stability condition sharply reduces such transient stopping while preserving stopping in the stable boundary-convergent scenario, at the cost of a modest delay.  This is the numerical counterpart of the reverse-martingale principle: a near-zero or near-one conditional risk estimate should be treated as actionable only when the trajectory is also stable.

\begin{figure}[t]
\centering
\includegraphics[width=0.72\textwidth]{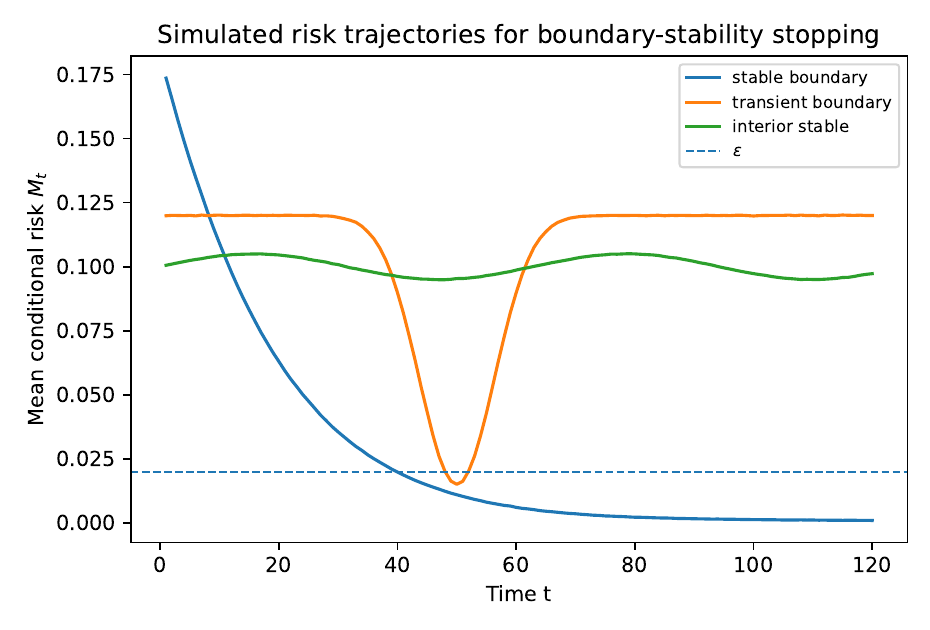}
\caption{Mean simulated conditional-risk trajectories in the reverse-martingale-style stability experiment.  The transient trajectory approaches the boundary temporarily but does not represent stable boundary behavior.}
\label{fig:rmstab}
\end{figure}

\subsection{Real-data rare-event illustration: RAND Health Insurance Experiment}
\label{sec:randhie}

As a real-data illustration, we use the RAND Health Insurance Experiment data included in the \texttt{statsmodels} data library.  The RAND HIE was a major randomized health-insurance study; the version used here contains 20,190 observations and 10 variables, and the \texttt{statsmodels} documentation identifies \texttt{hlthp} as an indicator for self-rated poor health \citep{statsmodels_randhie}.  In the full data, only 302 observations have \texttt{hlthp}=1, giving an empirical prevalence of approximately 1.50\%.  This makes the data useful for illustrating rare-event binary modelling in a real, non-simulated setting.

The outcome is
\[
    Y_i=\ind\{\text{self-rated health is poor}\}.
\]
We considered two covariate designs.  The base design uses six variables available in the data set: \texttt{lncoins}, \texttt{idp}, \texttt{lpi}, \texttt{fmde}, \texttt{physlm}, and \texttt{disea}.  The quadratic design augments these six standardized variables with all second-order terms and pairwise products, giving 27 predictors.  The latter design is not intended as a preferred substantive health model.  It is a controlled stress test showing what happens when an analyst fits a more flexible logistic score to a rare-event data set with few events per variable.

For each design and each training size $n\in\{200,500,1000\}$, we drew 1,000 random training subsamples without replacement.  We compared a nearly unpenalized logistic fit, used as a numerical proxy for ordinary maximum likelihood, with ridge logistic regression.  The nearly unpenalized fit uses a very weak $L_2$ penalty only to keep the optimization finite under separation-like samples; large fitted logits, extreme fitted probabilities, and convergence warnings are still interpreted as boundary instability.  For each replicate, we recorded the number of events, whether the training sample contained only one outcome class, the maximum absolute fitted logit, the fraction of fitted probabilities below $10^{-6}$ or above $1-10^{-6}$, and whether the fit met an instability criterion.

\begin{table}[t]
\centering
\caption{Real-data rare-event illustration using the RAND Health Insurance Experiment data.  The outcome is self-rated poor health, with prevalence 1.50\% in the full data.  Each row is based on 1,000 random training subsamples.  The quadratic design contains all first- and second-order terms of the six standardized base predictors.  The column labelled MLE uses a nearly unpenalized logistic fit with a very weak L2 penalty for numerical finiteness.}
\label{tab:randhie}
\scriptsize
\resizebox{\textwidth}{!}{%
\begin{tabular}{llrrrrrrrrr}
\toprule
Design & $n$ & $d$ & Mean events & One-class \% & MLE unstable \% & Ridge unstable \% & MLE max $|\eta|$ & Ridge max $|\eta|$ & MLE loss & Ridge loss \\
\midrule
base & 200 & 6 & 3.2 & 3.8 & 77.0 & 3.8 & 23.9 & 4.9 & 0.048 & 0.068 \\
base & 500 & 6 & 7.4 & 0.0 & 29.7 & 0.0 & 9.0 & 5.6 & 0.057 & 0.061 \\
base & 1000 & 6 & 14.9 & 0.0 & 3.5 & 0.0 & 7.3 & 5.9 & 0.061 & 0.063 \\
quadratic & 200 & 27 & 3.0 & 4.6 & 100.0 & 11.4 & 311.6 & 11.8 & 0.002 & 0.033 \\
quadratic & 500 & 27 & 7.4 & 0.0 & 100.0 & 6.0 & 430.5 & 12.8 & 0.021 & 0.044 \\
quadratic & 1000 & 27 & 15.1 & 0.0 & 99.6 & 1.9 & 169.6 & 11.9 & 0.045 & 0.054 \\
\bottomrule
\end{tabular}%
}
\end{table}

Table~\ref{tab:randhie} shows that the real data reproduce the same boundary phenomenon seen in the simulations.  With $n=200$, the expected number of poor-health events is only about three, and one-class training samples occur about 4\%--5\% of the time.  Even in the six-variable base design, the nearly unpenalized logistic fit is unstable in 77.0\% of repetitions at $n=200$ and 29.7\% at $n=500$.  The instability is much more severe in the quadratic design: the nearly unpenalized fit is unstable in essentially all repetitions for $n=200$ and $n=500$, with median maximum absolute logits far beyond ordinary probability interpretation.  Ridge regularization greatly reduces these numerical extremes, although the smallest samples remain difficult because the event count is intrinsically small.

\begin{figure}[t]
\centering
\includegraphics[width=0.72\textwidth]{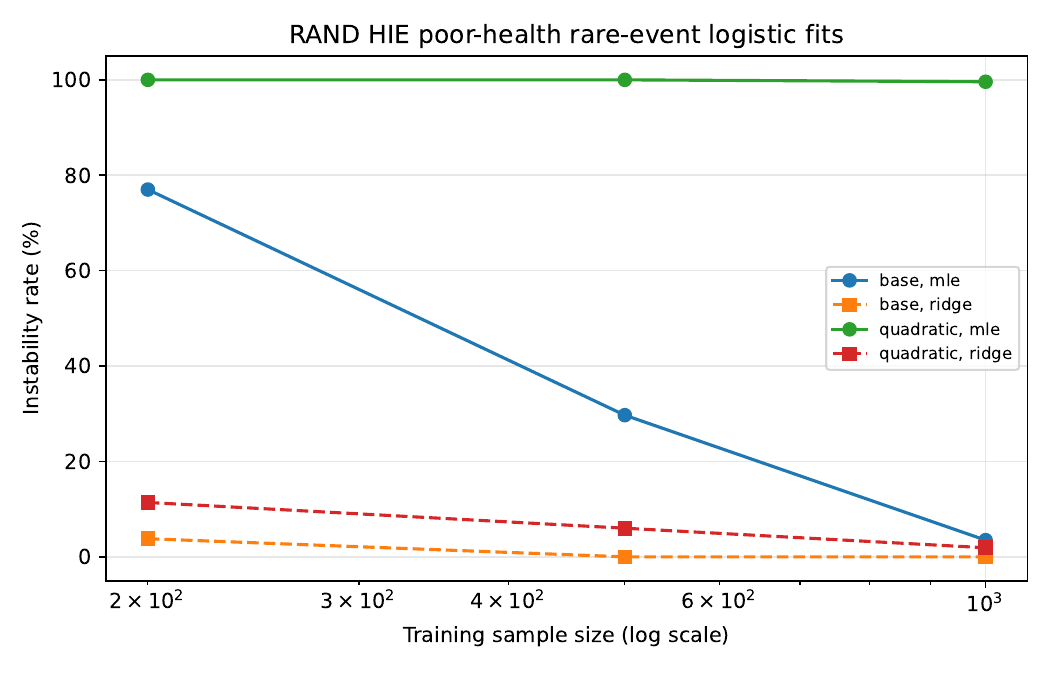}
\caption{Instability rate in the RAND HIE poor-health rare-event illustration.  The outcome prevalence is approximately 1.50\%.  The quadratic design amplifies separation-like behavior by increasing the number of fitted covariate terms relative to the number of events.}
\label{fig:randhie_instability}
\end{figure}

Figure~\ref{fig:randhie_instability} summarizes the main empirical lesson.  Rare-event data with a modest number of predictors can already produce boundary-looking fitted probabilities in small random subsamples; adding flexible terms makes the problem substantially worse.  This supports the practical message of the paper: an extreme fitted probability in a rare-event logistic analysis should be treated as a warning sign about event scarcity, separation, and model complexity, not as finite-sample proof of an exact zero or one probability.

\subsection{Synthesis of numerical findings}
\label{sec:numerical_synthesis}

The five numerical studies support the same conclusion from complementary directions.  In the Bernoulli experiment, an apparent zero estimate is a frequent finite-sample event when the true success probability is small.  For example, a positive but rare probability can generate an all-failure sample and hence \(\hat p_n=0\).  The correct interpretation is not that the data prove \(p=0\), but that a pre-specified rule may or may not support a practical statement such as \(p\leq\varepsilon\).  This is the simplest setting in which the distinction between exact and practical degeneracy is visible.

The logistic experiments show why the same distinction matters once covariates are introduced.  Even when the data-generating logistic model has finite coefficients and strictly interior probabilities, rare events can produce one-class samples, near-separating directions, large fitted logits, and numerically extreme fitted probabilities.  These effects occur in the low-dimensional experiment and become more severe in the high-dimensional rare-event experiment, where the number of events per variable is small.  The ridge fits are not introduced as a new optimal method; they serve as a computational demonstration that regularization changes the interpretation from divergent coefficients to finite predictive probabilities.

The real-data RAND HIE illustration confirms that the issue is not created only by artificial examples.  The poor-health outcome is rare, with about 1.50\% prevalence.  Random training subsamples therefore often contain only a few events, and the fitted logistic score can become extreme even though the data come from a real health-economics study rather than a constructed separation example.  The quadratic design further shows how routine modelling choices, such as adding flexible transformations and interactions, can turn rare-event scarcity into boundary-like numerical behavior.

The reverse-martingale-style experiment adds the sequential lesson.  A conditional-risk trajectory can be close to the boundary for a short period and then return to the interior.  A boundary-only rule treats this transient movement as decisive, whereas the boundary-plus-stability rule is more selective.  This supports the main operational message of the paper: a near-zero or near-one probability estimate should become actionable only when it is accompanied by uncertainty control and stability of the underlying conditional-risk process.

Thus, the numerical studies clarify the value of the proposed framework.  The revisit brings together three phenomena that are often discussed separately: zero estimates in Bernoulli rare-event data, separation in logistic regression, and premature stopping in sequential risk prediction.  The common remedy is not to deny boundary behavior, but to state it at the right level: finite samples justify controlled practical boundary statements, while reverse martingales describe boundary behavior as a limiting conditional-law property.

\section{Consequences for logistic and sequential prediction}
\label{sec:applications}

\subsection{Logistic prediction over covariate regions}

In separated logistic regression, the coefficient vector may diverge while the fitted probability surface approaches a meaningful limiting classifier.  For a covariate value $x$, write
\[
    M_n(x)=\Pp(Y=1\given X=x,\G_n).
\]
If the fitted logistic coefficient is finite, $M_n(x)=\expit\{x^\top\beta_n\}\in(0,1)$.  Under accumulating separation, however, $x^\top\beta_n$ may tend to $+\infty$ or $-\infty$, and $M_n(x)$ may tend to one or zero.  The stable object is therefore the limiting conditional probability surface, not necessarily the limiting coefficient vector.

For a clinically or operationally relevant region $A\subseteq\R^d$, define the region-wise boundary distance
\[
    B_n(A)=\sup_{x\in A}\min\{M_n(x),1-M_n(x)\}.
\]
The condition $B_n(A)\leq\varepsilon$ means that the entire region is close to one of the two boundaries.  A region-wise practical stopping rule has the form
\[
    \tau_{\mathrm{RM}}(A)
    =\inf\left\{n\geq n_{\min}:
        B_n(A)\leq\varepsilon,
        \quad \sup_{x\in A}r_n(x)\leq\eta,
        \quad W_n(A)\leq w
    \right\}.
\]
This guards against two errors: stopping because a coefficient is large but unstable, and stopping because a small early sample happens to be separated.

\subsection{Computational implementation through backward coherence}

The preceding framework is model-agnostic.  In low-dimensional logistic regression, regularized likelihood or Bayesian updating may be sufficient.  In high-dimensional longitudinal prediction, one may compute a latent state $H_t$ from patient or system histories $Z_{1:t}$ and impose a backward-coherence penalty
\[
    L_{\mathrm{RM}}=\frac{1}{T-1}\sum_{t=1}^{T-1}
    \|H_t-g_\phi(H_{t+1})\|^2.
\]
The corresponding defect
\[
    r_t=\|H_t-g_\phi(H_{t+1})\|
\]
acts as a stability diagnostic.  A binary risk output $M_t=\Pp(Y=1\given H_t)$ should be treated as actionable boundary evidence only when
\[
    \min\{M_t,1-M_t\}\leq\varepsilon,
    \qquad
    r_t\leq\eta,
\]
and uncertainty is sufficiently small.  This is a computational version of the same statistical principle: boundary closeness is not enough; boundary closeness must be stable.

\subsection{Adaptive treatment and monitoring}

In dynamic treatment-regime problems, the decision is often not whether one probability is near zero or one, but whether one action has become clearly preferable for the current individual.  Let $a\in\A$ denote a treatment action and let
\[
    M_t^{(a)}=\Pp(Y^{(a)}=1\given H_t)
\]
be the patient-specific potential-outcome risk under action $a$.  A forward sequential rule may recommend $a^*$ when
\[
    \Pp\left\{M_t^{(a^*)}-M_t^{(a)}\geq \Delta
    \text{ for all }a\neq a^*\given\D_t\right\}
    \geq1-\alpha.
\]
The reverse-martingale refinement is to require that this advantage is stable along the latent trajectory, for example $r_t^{(a^*)}\leq\eta$ and, if relevant, $r_t^{(a)}\leq\eta$ for competing actions.  This prevents a treatment switch from being driven only by a transient hidden-state fluctuation.  Related sequential treatment formulations include \citet{murphy2003}, \citet{robins2004}, and \citet{chakraborty2013}.

\section{Discussion}
\label{sec:discussion}

The paper makes one compound point: all-failure Bernoulli runs, logistic separation, and near-degenerate sequential risk trajectories are three faces of the same finite-versus-limiting-law confusion, and the remedy in each case is to require boundary closeness, uncertainty control, and trajectory stability together.  The classical tools (one-sided binomial tests, Clopper--Pearson bounds, confidence sequences, SPRT) correctly handle the parts of the problem they were designed for; the reverse-martingale formulation adds the stability interpretation that those tools do not provide.

The numerical studies confirm that this stability condition is not cosmetic.  In Study~4, a boundary-only rule stops on a transient trajectory with probability 0.999, whereas the three-condition rule reduces this to 0.107 while preserving stopping in the genuinely degenerate scenario.  This separation is the operational payoff of the framework.

The framework is intentionally model-agnostic.  In low-dimensional logistic regression, regularized fits and posterior uncertainty supply the required ingredients directly.  In high-dimensional or sequential prediction systems, the backward-coherence penalty $L_{\mathrm{RM}}$ provides a tractable stability diagnostic.  In all cases the message is the same: $M_\infty\in\{0,1\}$ is a statement about a limiting conditional law, not about any finite estimate, and acting on boundary closeness without stability evidence is premature.

\appendix

\section{Numerical-study settings and reproducibility details}
\label{app:numerical_details}

This appendix gives the complete settings used for the numerical studies in Section~\ref{sec:numerical}.  The purpose of these experiments is diagnostic rather than to tune the best possible classifier.  The reported quantities are intended to show how apparent boundary behavior arises under rare events, sparse covariate information, model flexibility, and sequential risk trajectories.  The code supplied with the manuscript generates the simulation results, figures, \texttt{.csv} summaries, and the \LaTeX{} table snippets used in the paper.

\subsection{Fixed-sample one-sided testing benchmark}
\label{app:onesided_benchmark}

The all-failure stopping threshold in Study~1 is the $s=0$ special case of the exact one-sided binomial test of $H_0:p\geq\varepsilon$, with exact $p$-value $(1-\varepsilon)^n$ and dual Clopper--Pearson upper bound $U_n(0;\alpha)=1-\alpha^{1/n}$; see Remark~\ref{rem:classical}.  This benchmark is not carried over to the logistic and reverse-martingale studies, where boundary behavior arises from covariate separation and trajectory instability rather than a single scalar binomial count.

\subsection{SPRT benchmark for Bernoulli rare-event monitoring}
\label{app:sprt_benchmark}

For completeness, the SPRT for $H_0:p=p_0$ versus $H_1:p=p_1$ ($p_1<p_0$) uses the log-likelihood ratio
\[
    L_n=S_n\log\frac{p_1}{p_0}+(n-S_n)\log\frac{1-p_1}{1-p_0},
\]
with Wald boundaries $a=\log\{\beta/(1-\alpha)\}$ and $b=\log\{(1-\beta)/\alpha\}$ \citep{wald1945,waldwolfowitz1948,siegmund1985}.  On an all-failure path ($S_n=0$), the SPRT favours $H_1$ after
\[
    \tau_{\mathrm{SPRT},0}=\left\lceil\frac{\log\{(1-\beta)/\alpha\}}{\log\{(1-p_1)/(1-p_0)\}}\right\rceil
\]
observations.  This should be used in place of Proposition~\ref{prop:allfailure} whenever the scientific question is genuinely a choice between two pre-specified point hypotheses.  The present paper does not compete with SPRT in that domain.  The boundary-degeneracy framework addresses the complementary setting: composite one-sided statements, logistic separation, and trajectory stability, none of which fits the simple-versus-simple SPRT template.

\subsection{Common numerical conventions}
\label{app:common_numerics}

All simulation studies used fixed pseudo-random seeds so that the numerical results are reproducible.  The logistic-regression studies define an ordinary-logistic fit as unstable if at least one of the following occurs:
\begin{enumerate}[label=(\roman*)]
    \item the sample contains only one response class, so that an ordinary binary logistic fit is not identifiable;
    \item the numerical optimizer fails to converge or issues a convergence warning;
    \item the Euclidean norm of the fitted coefficient vector exceeds 50;
    \item the largest absolute fitted training logit exceeds 30;
    \item more than 1\% of fitted training probabilities are numerically extreme, defined as smaller than $10^{-6}$ or larger than $1-10^{-6}$.
\end{enumerate}
These thresholds are not proposed as universal diagnostic constants.  They are used to make the notion of separation-like numerical behavior explicit and reproducible.  The same thresholds are used across the simulated and real-data logistic experiments, except where stated otherwise.

For synthetic logistic models, the intercept $\beta_0$ was calibrated numerically so that
\[
    \E\{\expit(\beta_0+X^\top\beta)\}=\rho,
\]
where $\rho$ is the target marginal event probability.  Calibration was done by solving the above one-dimensional equation using Monte Carlo integration with $250{,}000$ draws from the corresponding normal linear predictor distribution.  Independent test samples of size $5{,}000$ were used in Studies~2 and~3 for the reported predictive log-loss summaries.

\subsection{Study 1: Bernoulli rare-event experiment}
\label{app:study1_settings}

Study~1 considers iid Bernoulli observations
\[
    Y_i\overset{\mathrm{iid}}{\sim}\operatorname{Bernoulli}(p).
\]
The true success probabilities are
\[
    p\in\{0.10,0.01,0.005\},
\]
with maximum sample sizes
\[
    n_{\max}\in\{100,300,600,1000,3000\}.
\]
The practical-boundary tolerances are
\[
    \varepsilon\in\{0.01,0.005\},
    \qquad \alpha=0.05.
\]
The ordinary estimator and Jeffreys-smoothed estimator are
\[
    \hat p_n=\frac{S_n}{n},
    \qquad
    \tilde p_n^J=\frac{S_n+1/2}{n+1}.
\]
The all-failure practical-zero rule stops after
\[
    n_\varepsilon=\left\lceil\frac{\log(\alpha)}{\log(1-\varepsilon)}\right\rceil
\]
consecutive failures, provided $n_{\max}\geq n_\varepsilon$.  Exact probabilities are used for
\[
    \Pp(\hat p_n=0)=(1-p)^n
    \quad\text{and}\quad
    \Pp(\tau_0\leq n_{\max})=(1-p)^{n_\varepsilon}
\]
when the stopping threshold is reachable.  A Monte Carlo check with $R=50{,}000$ repetitions per setting is included in the script, but the main table reports the exact probabilities.

\subsection{Study 2: Low-dimensional rare-event logistic regression}
\label{app:study2_settings}

Study~2 uses a three-covariate logistic data-generating model:
\[
    X_i\sim N(0,I_3),
    \qquad
    Y_i\given X_i\sim \operatorname{Bernoulli}\{\expit(\beta_0+X_i^\top\beta)\},
\]
where
\[
    \beta=(1.0,-0.7,0.5)^\top.
\]
The intercept $\beta_0$ is calibrated separately for each target prevalence
\[
    \rho\in\{0.10,0.01,0.005\}.
\]
The training sample sizes are
\[
    n\in\{50,100,200,500,1000,2000\},
\]
with $R=1{,}000$ Monte Carlo repetitions per $(\rho,n)$ setting.  The main table reports $n\in\{100,500,2000\}$ for compactness.  For each generated training set, the study records the number of events, the one-class sample indicator, ordinary-logistic instability, coefficient norm, largest absolute training logit, fraction of numerically extreme fitted probabilities, and test-set log loss and Brier score.

Two estimators are compared.  The first is the ordinary logistic maximum-likelihood fit, computed by Newton/IRLS with no penalty.  The second is ridge logistic regression, computed by the same IRLS routine with an $L_2$ penalty of size $\lambda=1$ on the slope coefficients and no penalty on the intercept.  Ridge logistic regression is used as a simple finite regularized comparator; it represents the broader idea that Firth, ridge, or Bayesian regularization prevents divergent finite-sample logistic estimates under separation-like data.

\subsection{Study 3: High-dimensional rare-event logistic regression}
\label{app:study3_settings}

Study~3 increases the number of covariates and uses a sparse signal.  The covariates are generated as
\[
    X_i\sim N(0,I_d),
    \qquad d\in\{20,50\}.
\]
The coefficient vector satisfies
\[
    \beta_j=\frac{2}{\sqrt{5}},\quad j=1,\ldots,5,
    \qquad
    \beta_j=0,
    \quad j>5.
\]
The intercept is calibrated to target prevalence
\[
    \rho\in\{0.01,0.005\}.
\]
The training sample sizes are
\[
    n\in\{500,1000\},
\]
and each $(d,\rho,n)$ setting uses $R=1{,}000$ Monte Carlo repetitions.  An independent test set of size $5{,}000$ is generated for each design point.

The study compares an unpenalized logistic fit and a ridge logistic fit.  The unpenalized fit uses \texttt{scikit-learn} logistic regression with no penalty, the \texttt{lbfgs} solver, maximum 200 iterations, and fitted intercept.  The ridge fit uses an $L_2$ penalty with $C=1$.  The recorded summaries include mean number of events, mean events per variable,
\[
    \operatorname{EPV}=\frac{\#\{i:Y_i=1\}}{d},
\]
one-class rate, ordinary-logistic instability rate, median coefficient norm, median largest absolute training logit, mean fraction of numerically extreme fitted probabilities, median test log loss, median test Brier score, and median absolute calibration error on the test sample.

\subsection{Study 4: Boundary closeness versus reverse-martingale-style stability}
\label{app:study4_settings}

Study~4 is a controlled trajectory simulation designed to isolate the role of the stability condition in the reverse-martingale-style stopping rule.  It does not simulate a full likelihood model.  Instead, it directly generates conditional-risk paths $M_t$ over
\[
    t=1,\ldots,T,
    \qquad T=120,
\]
with $R=5{,}000$ trajectories per scenario.  Three scenarios are used.

For the stable boundary-convergent scenario,
\[
    M_t=\operatorname{clip}\{\expit(-1.5-0.06t)+\xi_t,10^{-5},1-10^{-5}\},
    \qquad
    \xi_t\sim N\left(0,\frac{0.025^2}{t}\right).
\]
For the transient-boundary scenario,
\[
    M_t=\operatorname{clip}\left\{0.12-0.105\exp\left[-\frac{(t-50)^2}{80}\right]+\xi_t,10^{-5},1-10^{-5}\right\},
    \qquad
    \xi_t\sim N(0,0.006^2).
\]
For the stable-interior scenario,
\[
    M_t=\operatorname{clip}\{0.10+0.005\sin(t/10)+\xi_t,10^{-5},1-10^{-5}\},
    \qquad
    \xi_t\sim N(0,0.004^2).
\]
The boundary distance is
\[
    B_t=\min\{M_t,1-M_t\}.
\]
The stability defect uses a five-step future average,
\[
    r_t=\left|M_t-\frac{1}{5}\sum_{k=1}^5 M_{t+k}\right|,
\]
where defined.  The uncertainty-width schedule is
\[
    W_t=\frac{0.10}{\sqrt{t}}.
\]
The boundary-only rule is
\[
    \tau_B=\inf\{t\geq n_{\min}:B_t\leq\varepsilon\},
\]
whereas the boundary-plus-stability rule is
\[
    \tau_{RM}=\inf\{t\geq n_{\min}:B_t\leq\varepsilon,
    \ r_t\leq\eta,
    \ W_t\leq w\}.
\]
The numerical constants are
\[
    \varepsilon=0.02,
    \qquad
    \eta=0.0008,
    \qquad
    w=0.015,
    \qquad
    n_{\min}=10.
\]
The reported summaries are stopping probability, mean stopping time conditional on stopping, and median stopping time conditional on stopping.

\subsection{Study 5: Real-data rare-event illustration using RAND HIE}
\label{app:study5_settings}

Study~5 uses the RAND Health Insurance Experiment data set distributed with \texttt{statsmodels}.  The binary outcome is \texttt{hlthp}, indicating self-rated poor health.  In the version used here, the full data contain $20{,}190$ observations and the event prevalence is approximately $1.50\%$.  The analysis is intended as a real-data rare-event diagnostic rather than a causal analysis of the RAND experiment.

The base design uses six predictors:
\[
    \texttt{lncoins},\ \texttt{idp},\ \texttt{lpi},\ \texttt{fmde},\ \texttt{physlm},\ \texttt{disea}.
\]
These predictors are standardized before fitting.  The quadratic design contains all first- and second-order polynomial terms of the six standardized predictors, excluding the intercept; this gives $d=27$ covariates.  For each design, random training subsamples are drawn without replacement at
\[
    n\in\{100,200,500,1000,2000\}.
\]
There are $R=1{,}000$ random subsamples for each combination of design and sample size.

The study compares two logistic fits.  The column labelled ``MLE'' in the real-data table is a nearly unpenalized logistic regression with a very weak $L_2$ penalty, $C=10^4$, used only to obtain finite numerical optimization in separation-like samples.  The ridge comparator uses $C=1$.  Both are fit with the \texttt{liblinear} solver, fitted intercept, tolerance $10^{-4}$, and maximum 50 iterations.  For each training subsample the recorded quantities are the number of events, event rate, one-class indicator, convergence flag, coefficient norm, largest absolute training logit, fraction of numerically extreme fitted probabilities, mean fitted probability, training log loss, and training Brier score.  The instability criterion is the common criterion in Appendix~\ref{app:common_numerics}.

Because the real-data study repeatedly resamples the same finite data set, the reported log-loss and Brier-score values are training-sample diagnostics, not external predictive-performance estimates.  Their role is to document how rare-event scarcity and model flexibility produce separation-like numerical behavior in a public data example.


\begin{thebibliography}{99}

\bibitem[Albert and Anderson(1984)]{albert1984}
Albert, A. and Anderson, J.\,A. (1984).
On the existence of maximum likelihood estimates in logistic regression models.
\textit{Biometrika}, 71(1), 1--10.
\href{https://doi.org/10.1093/biomet/71.1.1}{doi:10.1093/biomet/71.1.1}

\bibitem[Chakraborty and Moodie(2013)]{chakraborty2013}
Chakraborty, B. and Moodie, E.\,E.\,M. (2013).
\textit{Statistical Methods for Dynamic Treatment Regimes}.
Springer, New York.
\href{https://doi.org/10.1007/978-1-4614-7428-9}{doi:10.1007/978-1-4614-7428-9}

\bibitem[Clopper and Pearson(1934)]{clopper1934}
Clopper, C.\,J. and Pearson, E.\,S. (1934).
The use of confidence or fiducial limits illustrated in the case of the binomial.
\textit{Biometrika}, 26(4), 404--413.
\href{https://doi.org/10.1093/biomet/26.4.404}{doi:10.1093/biomet/26.4.404}

\bibitem[Doob(1953)]{doob1953}
Doob, J.\,L. (1953).
\textit{Stochastic Processes}.
John Wiley \& Sons, New York.

\bibitem[Durrett(2019)]{durrett2019}
Durrett, R. (2019).
\textit{Probability: Theory and Examples} (5th ed.).
Cambridge University Press, Cambridge.
\href{https://doi.org/10.1017/9781108591034}{doi:10.1017/9781108591034}

\bibitem[Firth(1993)]{firth1993}
Firth, D. (1993).
Bias reduction of maximum likelihood estimates.
\textit{Biometrika}, 80(1), 27--38.
\href{https://doi.org/10.1093/biomet/80.1.27}{doi:10.1093/biomet/80.1.27}

\bibitem[Foo and Chang(2026)]{foo_chang_2026_optimal_stopping}
Foo, H.-M. and Chang, Y.-c.\,I. (2026).
Optimal stopping in sequential clinical prediction.
Preprint, \href{https://arxiv.org/abs/2604.22216}{arXiv:2604.22216 [stat.ME]}.

\bibitem[Gelman et~al.(2008)]{gelman2008}
Gelman, A., Jakulin, A., Pittau, M.\,G., and Su, Y.-S. (2008).
A weakly informative default prior distribution for logistic and other regression models.
\textit{The Annals of Applied Statistics}, 2(4), 1360--1383.
\href{https://doi.org/10.1214/08-AOAS191}{doi:10.1214/08-AOAS191}

\bibitem[Heinze and Schemper(2002)]{heinze2002}
Heinze, G. and Schemper, M. (2002).
A solution to the problem of separation in logistic regression.
\textit{Statistics in Medicine}, 21(16), 2409--2419.
\href{https://doi.org/10.1002/sim.1047}{doi:10.1002/sim.1047}

\bibitem[Howard et~al.(2021)]{howard2021}
Howard, S.\,R., Ramdas, A., McAuliffe, J., and Sekhon, J. (2021).
Time-uniform, nonparametric, nonasymptotic confidence sequences.
\textit{The Annals of Statistics}, 49(2), 1055--1080.
\href{https://doi.org/10.1214/20-AOS1991}{doi:10.1214/20-AOS1991}

\bibitem[Lehmann and Romano(2005)]{lehmann2005}
Lehmann, E.\,L. and Romano, J.\,P. (2005).
\textit{Testing Statistical Hypotheses} (3rd ed.).
Springer, New York.
\href{https://doi.org/10.1007/0-387-27605-X}{doi:10.1007/0-387-27605-X}

\bibitem[Murphy(2003)]{murphy2003}
Murphy, S.\,A. (2003).
Optimal dynamic treatment regimes.
\textit{Journal of the Royal Statistical Society, Series B}, 65(2), 331--355.
\href{https://doi.org/10.1111/1467-9868.00389}{doi:10.1111/1467-9868.00389}

\bibitem[Robins(2004)]{robins2004}
Robins, J.\,M. (2004).
Optimal structural nested models for optimal sequential decisions.
In D.\,Y.\ Lin and P.\,J.\ Heagerty (eds.),
\textit{Proceedings of the Second Seattle Symposium in Biostatistics},
Lecture Notes in Statistics, vol.\ 179, pp.\ 189--326. Springer, New York.
\href{https://doi.org/10.1007/978-1-4419-9076-1_11}{doi:10.1007/978-1-4419-9076-1\_11}

\bibitem[Siegmund(1985)]{siegmund1985}
Siegmund, D. (1985).
\textit{Sequential Analysis: Tests and Confidence Intervals}.
Springer, New York.
\href{https://doi.org/10.1007/978-1-4613-9549-7}{doi:10.1007/978-1-4613-9549-7}

\bibitem[Statsmodels Developers(2024)]{statsmodels_randhie}
Statsmodels Developers (2024).
RAND Health Insurance Experiment Data.
\textit{Statsmodels Datasets Documentation}, version~0.14.4.
\url{https://www.statsmodels.org/v0.14.4/datasets/generated/randhie.html}

\bibitem[Wald(1945)]{wald1945}
Wald, A. (1945).
Sequential tests of statistical hypotheses.
\textit{The Annals of Mathematical Statistics}, 16(2), 117--186.
\href{https://doi.org/10.1214/aoms/1177731118}{doi:10.1214/aoms/1177731118}

\bibitem[Wald(1947)]{wald1947}
Wald, A. (1947).
\textit{Sequential Analysis}.
John Wiley \& Sons, New York.

\bibitem[Wald and Wolfowitz(1948)]{waldwolfowitz1948}
Wald, A. and Wolfowitz, J. (1948).
Optimum character of the sequential probability ratio test.
\textit{The Annals of Mathematical Statistics}, 19(3), 326--339.
\href{https://doi.org/10.1214/aoms/1177730197}{doi:10.1214/aoms/1177730197}

\end{thebibliography}
\end{document}